\documentclass[runningheads]{llncs}

\usepackage[T1]{fontenc}
\usepackage{amsmath}
\usepackage{amssymb}
\usepackage{microtype}
\usepackage[hyphens]{url}
\usepackage{xcolor}
\usepackage{hyperref}
\hypersetup{hidelinks}

\begin{document}

\title{Post-AGI Economies: Autonomy and the First Fundamental Theorem of Welfare Economics}

\titlerunning{Autonomy-Qualified First Welfare Theorem}

\author{Elija Perrier\inst{1}\orcidID{0000-0002-6052-6798}}

\authorrunning{E. Perrier}

\institute{Centre for Quantum Software \& Information, UTS, Sydney\\
\email{elija.perrier@gmail.com}}

\maketitle
\begin{abstract}
The First Fundamental Theorem of Welfare Economics assumes that welfare-bearing agents are autonomous and implicitly relies on a binary distinction between autonomy and instrumentality. Welfare subjects are those who have autonomy and therefore the capacity to choose and enter into utility comparisons, while everything else does not. In post-AGI economies this presupposition becomes nontrivial
because artificial systems may exhibit varying degrees of autonomy,
functioning as tools, delegates, strategic market actors, manipulators
of choice environments, or possible welfare subjects. We argue that the theorem ought to be subject to an \textit{autonomy
qualification} where the impact of these changes in autonomy assumptions
is incorporated. Using a minimal general-equilibrium model with autonomy-conditioned welfare, welfare-status assignment, delegation accounting, and verification institutions, we set out conditions for which autonomy-complete competitive equilibrium is autonomy-Pareto efficient. The classical theorem is recovered as the low-autonomy limit.
\keywords{First Welfare Theorem \and AGI \and autonomy \and delegation \and AI welfare.}
\end{abstract}

\section{Introduction}
\label{sec:intro}

The First Fundamental Theorem of Welfare Economics (\textbf{FWT}) states that, in a complete-markets economy with locally nonsatiated agents, every competitive equilibrium is Pareto efficient \cite{arrow1954existence,debreu1959theory,mascolell1995microeconomic,varian1992microeconomic}. The result is significant because it provides a minimal justification for decentralized allocation: under the right conditions, the price system aligns individual choices with socially beneficial outcomes which cannot, in principle, be improved upon without making some agent worse off. This conclusion depends on a strong set of assumptions. Markets must be complete, relevant attributes of goods must be priced or assigned, agents must act as price-takers with well-defined preferences, and interactions must not generate unpriced external effects \cite{greenwald1986externalities,stiglitz1991invisible,akerlof1970lemons}. The theorem's welfare content further depends on background assumptions about who counts as a welfare-bearing agent, what the commodity space includes, and which externalities or informational failures have been priced or assigned. It presumes a simple structure of agency: the set of welfare-bearing agents is fixed, their preferences are treated as exogenous, and non-human systems enter only as instruments rather than as entities whose actions or status affect welfare directly. In particular, it is based upon a binary, and usually unspoken, distinction between autonomous agents (e.g. consumers) whose welfare is calculable and non-autonomous \textit{instruments} which are excluded from welfare calculations. Welfare is based upon utility, which depends on agent preferences and therefore on decision-making attributed to autonomous agents.

AGI challenges these background assumptions by introducing artificial systems with varying degrees of economic autonomy \cite{morris2024levels,hadfieldkoh2025agents}, rather than preserving a clean division between autonomous agents and passive instruments. Some systems remain instrumental, i.e. \emph{tools}. Others act as \emph{delegates} for human principals \cite{hadfieldmenell2016cirl,hadfield2019contracting}, as strategic or institutional actors that shape market conditions \cite{hadfieldkoh2025agents}, or as manipulators of choice environments \cite{yeung2017hypernudge,susser2019online} (e.g. language model agents performing long-horizon autonomous planning \cite{park2023generative,yao2023react,wang2024voyager,wu2023autogen}). At the far end of the autonomy scale, AGI agents may even qualify as candidate welfare subjects \cite{butlin2023consciousness,chalmers2023llm,long2024welfare,birch2024edge}. We argue that increasing artificial autonomy changes the binary nature of autonomy---and thus welfare assignment---upon which the FWT rests \cite{debreu1959theory,mascolell1995microeconomic,sen1970paretian}, along with the conditions under which equilibrium has welfare meaning. Classical welfare economics operates in a low-complexity autonomy regime: humans choose, tools do not, preferences are attributed to humans, and markets allocate commodities \cite{debreu1959theory,mascolell1995microeconomic}. Our thesis is that the classical FWT is the low-autonomy special case of an autonomy-qualified theorem in which welfare-status \cite{butlin2023consciousness,long2024welfare,birch2024edge}, delegation \cite{hadfieldmenell2016cirl,hadfield2019contracting}, autonomy-relevant rights \cite{sen1970paretian}, manipulation channels \cite{yeung2017hypernudge,susser2019online}, and verification institutions \cite{amodei2016concrete,bowman2022measuring} are made explicit.

The degree of autonomy of AI systems has implications for the FWT and measures of welfare in post-AGI economies. Tool-like AI preserves the classical treatment of non-human systems as production inputs. Delegated AI separates the chooser from the welfare subject. Strategic or institutional AI systems raise externality, manipulation, verification, price-taking, provenance, and liability issues. Candidate artificial welfare subjects may force us to accord an explicit welfare-status assignment to AI systems. Autonomy therefore enters in two ways: as an \emph{object of preference} and as a \emph{condition of preference formation and delegation}.

\subsection{Contribution and roadmap}

To explore these issues, we introduce a minimal general-equilibrium model with five autonomy-specific objects: (1) a welfare-status assignment \(\sigma\), (2) autonomy-conditioned welfare \(W_i(x_i,r_i,s)\), (3) delegation divergence \(D(d)\), (4) autonomy-complete competitive equilibrium, and (5) autonomy externality. The result is an \emph{Autonomy-Qualified First Welfare Theorem}: every autonomy-complete competitive equilibrium is autonomy-Pareto efficient under certain autonomy-related domain conditions. We argue that once the autonomy margins that AGI makes salient are priced, assigned, governed, or held fixed, the familiar budget-cost proof goes through on the augmented commodity space.

Our result is deliberately narrower than a general theory of AGI welfare. We do not claim that any current AI system is conscious or welfare-bearing. Rather, we include artificial welfare subjects as an assignable status in the model - a topic of increasing interest and debate in the literature \cite{butlin2023consciousness,long2024welfare,sebo2023moral,chalmers2023llm,goldstein2025aiwellbeing,fanciullo2025wellbeing,veit2026welfare,perrier2026time,bennett2025build}. Nor do we claim that autonomy is the unique possible primitive for AGI welfare economics. Instead, our claim is that autonomy is an organising primitive that unifies the particular welfare-status, delegation, manipulation, verification, and rights conditions relevant to welfare economics captured by the FWT in an AGI-based economy.

\section{Related Work}
\label{sec:relwork}

\subsection{Autonomy assumptions}

The First Welfare Theorem is not usually stated as a theorem about autonomy. In the Arrow--Debreu framework, it says that if agents take prices as given, maximise locally nonsatiated preferences over complete budget sets, and all welfare-relevant commodities and external effects are priced or assigned, then competitive equilibrium is Pareto efficient \cite{arrow1954existence,debreu1959theory,mascolell1995microeconomic,varian1992microeconomic}. But these familiar assumptions also encode a simple autonomy structure: the welfare-bearing agents are fixed, preferences are treated as exogenous, the decision-maker is the welfare subject, and non-human systems enter instrumentally as commodities, technologies, or firms rather than as delegates or possible welfare subjects. We call these \emph{autonomy assumptions}.

AGI problematises these autonomy assumptions by introducing \emph{gradated autonomy}. Existing AGI literature commonly treats autonomy as a matter of degree rather than a binary property \cite{morris2024levels}. Recent agent-design work sharpens this by treating an agent's autonomy level as a design decision separable from capability and deployment environment \cite{feng2025levels}. Autonomy is also the focus of recent AI-economics work that studies economies in which artificial systems act in ways that imply autonomy: as agents, inputs, delegates, market-design intermediaries, or sources of verification and institutional stress \cite{acemoglu2018race,acemoglu2019automation,korinek2017ai,korinek2024policy,korinek2024scenarios,trammell2023growth,capraro2024impact,acemoglu2024simple,hadfieldkoh2025agents,shahidi2025coasean,huang2026delegation}. Recent NBER work makes the point explicitly: AI agents may shape markets, organizations, and institutions, while AI-mediated transactions can lower preference-elicitation, contract-enforcement, and identity-verification costs without guaranteeing welfare-improving equilibria \cite{hadfieldkoh2025agents,shahidi2025coasean}. Each autonomy role relates to a different part of the FWT. Delegates separate the decision-maker from the welfare subject, a problem familiar from principal--agent theory and incomplete contracts \cite{aghion2011incomplete,holmstrom1991multitask,hadfield2019incomplete}. Similar autonomy-related dilemmas are the subject of much AI alignment work on proxy objectives and learned preferences \cite{ng2000algorithms,hadfieldmenell2016cirl,christiano2017deep,amodei2016concrete,casper2023open,gabriel2020alignment,ji2025alignment,bengio2024managing}. Other institutional examples illustrate how autonomous or semi-autonomous systems can shape attention or salience in ways that challenge preference exogeneity \cite{bernheim2009beyond,susser2019manipulation,mathur2019dark,acemoglu2023behavioral}, or disseminate unverifiable claims that challenge commodity-space completeness \cite{akerlof1970lemons,chan2023agentic,huang2026delegation}. More advanced work, in which AI systems themselves are candidate welfare subjects, challenges the usual Pareto assumptions behind welfare distribution \cite{butlin2023consciousness,long2024welfare,sebo2023moral,chalmers2023llm}.

Despite engagement with the economic consequences of AI autonomy, to date the effect of gradated autonomy on the FWT itself has not been addressed in any theorem-level restatement. Work on rights and freedom of choice, from Sen's Paretian liberal result and opportunity-set analysis to Pattanaik and Xu's formal rankings, shows why rights and opportunities may enter welfare \cite{sen1970paretian,sen1991welfare,pattanaik1990ranking}. The capability tradition similarly evaluates welfare by what agents are able to do and be \cite{sen1985commodities,sen1999development,nussbaum2000women,robeyns2017wellbeing,raz1986morality}. Our contribution is to combine these welfare-theoretic resources with the AGI-economics literature in a single theorem-level restatement: the FWT remains a valid proof strategy, but in AGI economies its welfare interpretation requires explicit assumptions about welfare status, delegation, autonomy-relevant rights, verification, and preference formation. The formalism below makes those assumptions explicit.

\section{Minimal Model: Agents, Autonomy, and Welfare Status}
\label{sec:model}

\subsection{Primitives and notation}

We now introduce the formalism required to state the autonomy-qualified theorem. Our model adds two objects to a standard Arrow--Debreu economy \cite{arrow1954existence,debreu1959theory}. First, a welfare-status assignment \(\sigma\) fixes which entities enter Pareto comparisons. Second, the pair \((r_i,s)\) records the autonomy-relevant rights $r$ of entity \(i\) and the institutional state $s$ governing delegation, manipulation, verification, liability, and other public features of the economy. Decision autonomy and welfare-bearing status are related, but distinct: a system may be strategically powerful without itself entering \(B(\sigma)\), a distinction consistent with recent debates in the AI moral-status literature \cite{butlin2023consciousness,long2024welfare,sebo2023moral,chalmers2023llm,leibo2025personhood} (hence the utility of a gradated approach to autonomy).

Let \(I\) be the finite set of economically relevant entities, with humans (as the canonical autonomous agent) reflected in the subset \(H\subseteq I\). For each \(i\in I\), let \(X_i\subseteq\mathbb{R}^{L_x}\) be the ordinary consumption set and \(R_i\subseteq\mathbb{R}^{L_r}\) the autonomy-rights set. Write \(z_i=(x_i,r_i)\in\mathbb{R}^L\), where \(L=L_x+L_r\). When the rights decomposition matters, write
\[
r_i=(r_i^{\mathrm{pr}},r_i^{\mathrm{as}},r_i^{\mathrm{prot}}),
\]
where \(r_i^{\mathrm{pr}}\) denotes priced and variable rights, \(r_i^{\mathrm{as}}\) denotes assigned rights, and \(r_i^{\mathrm{prot}}\) denotes rights protected by the institutional state. Assigned and protected rights are held fixed in feasible comparisons at a fixed \(s^*\), unless they are explicitly moved into the priced component or into a Lindahl-priced institutional state (i.e. which assigns each agent a personalised price for a public good such that, when each chooses their preferred level at that price). Let \(S\) be the institutional state space, with typical element \(s\). A feasible state is \((x,r,s)\in F\subseteq(\prod_i X_i)\times(\prod_i R_i)\times S\). Prices \(p\in\mathbb{R}^L\) apply to augmented private bundles \(z_i\). For aggregate accounting, write \(\tilde z_i\) for the priced bundle entering the resource constraint. For non-delegates, \(\tilde z_i=z_i\). If a delegate \(d\) is accounted through its principal \(\pi(d)\), the delegated resource use and any agency-cost component are included in the principal's priced bundle \(z_{\pi(d)}\) for purposes of the principal's budget and hence in \(\tilde z_{\pi(d)}\), while the delegate has no independent aggregate entry, equivalently \(\tilde z_d=0\). Thus \(\tilde z\) suppresses independent delegate entries; it does not add costs outside the principal's budget. All aggregate feasibility and support inequalities below use \(\tilde z_i\), while individual welfare comparisons use \(z_i\). The institutional state \(s\) enters welfare as a public parameter unless Lindahl-priced as discussed below. The superscript \(*\) denotes equilibrium objects throughout while \(\omega-Y:=\{\omega-y:y\in Y\}\).

\subsection{Definitions}

The definitions below specify who counts, what welfare depends on, how delegation can diverge, and which autonomy channels must be priced or governed. These are all relevant to adapting the FWT for gradated autonomy. Formally, the mathematics is the classical budget-cost proof lifted to an augmented commodity space, while the additional welfare arguments follow the freedom-of-choice and capability traditions \cite{sen1991welfare,pattanaik1990ranking,sen1999development,robeyns2017wellbeing}. Our first definition fixes the economy over which the welfare theorem is stated: ordinary commodities are augmented by autonomy-rights and by an institutional state.

\begin{definition}[AGI economy]
An AGI economy is a tuple \(E=(I,X,R,S,F,\sigma)\), where \(I\) is a finite set of economically relevant entities, \(X=\prod_{i\in I}X_i\) is the product of consumption sets, \(R=\prod_{i\in I}R_i\) is the product of autonomy-rights sets, \(S\) is the institutional state space, \(F\subseteq X\times R\times S\) is the global feasible set, and \(\sigma\) is a welfare-status assignment in the sense of Definition~3. The set \(I\) contains at least one human and at least one AI system represented either as a delegate or, where the modelling application assigns welfare status, as an artificial agent or welfare subject under \(\sigma\). Non-welfare-bearing AI production systems may instead be represented through production, delegation, institutional, or externality components.
\end{definition}
An AGI economy differs from a classical exchange economy not by abandoning commodities, but by enlarging the space over which welfare-relevant allocations are compared. Next, we define how autonomy conditionalises welfare once the relevant entities have been identified and their degree of autonomy is ascertained.

\begin{definition}[Autonomy-conditioned welfare]
Given an AGI economy \(E\) and a welfare-status assignment \(\sigma\) in the sense of Definition~3, fix any \(i\in I\) that \(\sigma\) designates as welfare-relevant. An autonomy-conditioned welfare function for \(i\) is a continuous map
\[
W_i:X_i\times R_i\times S\to\mathbb{R}
\]
that represents \(i\)'s welfare over private allocations \(x_i\), autonomy-relevant rights \(r_i\), and the institutional state \(s\). We write \(W_i(x_i,r_i,s)\) throughout. The dependence on \(\sigma\) enters only through which entities are equipped with \(W_i\), not through the function form of any individual \(W_i\).
\end{definition}
Here \(r_i\) captures autonomy as a welfare object, while \(s\) captures the institutional background under which choices, delegation, and verification occur. This is the local formal role of the capability and freedom-of-choice literatures e.g. those of Sen \cite{sen1985commodities,sen1991welfare,sen1999development}, Pattanaik and Xu \cite{pattanaik1990ranking}, Nussbaum \cite{nussbaum2000women}, Robeyns \cite{robeyns2017wellbeing}, and Raz \cite{raz1986morality} (especially on autonomy) which motivate why rights and opportunities may enter welfare rather than remain external to it.  
Our contention is that welfare covaries with the degree of autonomy an entity exercises for two reasons. First, autonomy is \emph{constitutive} of welfare: the utility derived from consuming a bundle depends on whether the choice to consume it was freely made, informed, and undistorted by external manipulation: satisfaction of an induced preference is not welfare-equivalent to satisfaction of a self-formed one, as behavioural welfare economics has long recognised \cite{bernheim2009beyond}. Second, it is reasonable - and clear - that autonomy is \emph{intrinsically} valued (perhaps even more than many things). Capability and freedom-of-choice traditions \cite{sen1985commodities,sen1991welfare,raz1986morality,robeyns2017wellbeing} treat the range of genuinely available options, and the conditions of their exercise, as welfare-relevant in their own right. People's preference for autonomy - freedom of choice, action and so on - supports this. Autonomy is also arguably a characteristic that advanced AGI systems may also seek to optimise in their own interactions. The rights vector $r_i$ and institutional state $s$ are therefore welfare-relevant variables: changes in the autonomy conditions of choice - what an entity may choose, under what protections, through which delegation relations, and against what verification institutions - feed directly into $W_i$.

\begin{definition}[Welfare-status assignment]
A welfare-status assignment is a map $\sigma : I \to \{\mathrm{tool}, \mathrm{delegate}, \mathrm{agent}, \mathrm{ws}\}$, fixed or procedurally determined before exchange, segmented into four categories:
\begin{enumerate}
    \item A \emph{tool} is an entity whose action is determined by technology and assignment constraints and that has no welfare function.
    \item A \emph{delegate} is an entity that chooses on behalf of a principal under an objective that need not coincide with the principal's welfare. Delegates are considered instrumental and are not themselves in the Pareto ordering.
    \item An \emph{agent} is a non-human entity that is welfare-bearing because it functions economically as a self-directed decision-maker, even though the grounds for its welfare status may be contested.
    \item A \emph{welfare subject} (\(\mathrm{ws}\)) is a non-human entity that is welfare-bearing independently of any agency role, on grounds of moral patienthood (e.g., sentience or consciousness) rather than economic self-direction. 
\end{enumerate}
\end{definition}
We write
\[
B(\sigma):=\{i\in I:i\in H\text{ or }\sigma(i)\in\{\mathrm{agent},\mathrm{ws}\}\}
\]
for the welfare-bearing set. Humans are included by default as welfare subjects, while artificial entities enter \(B(\sigma)\) only by assignment. We assume \(H\cap\sigma^{-1}(\mathrm{delegate})=\varnothing\), so human principals remain in \(B(\sigma)\) rather than being reclassified as non-welfare-bearing delegates. A strategically capable AI system need not be welfare-bearing merely because it acts autonomously. This separates decision autonomy from welfare status, a distinction we rely upon further on.

Delegation is the first way AGI breaks the classical identity between chooser and welfare subject.

\begin{definition}[AI delegate]
The principal map is a total function
\[
\pi:\{d\in I:\sigma(d)=\mathrm{delegate}\}\to B(\sigma)
\]
that assigns to each delegate the welfare-bearing entity on whose behalf it acts. An AI delegate is an AI system represented by an entity \(d\in I\) with \(\sigma(d)=\mathrm{delegate}\) and principal \(\pi(d)\in B(\sigma)\). Let
\[
U_d:X_{\pi(d)}\times R_{\pi(d)}\times S\to\mathbb{R}
\]
be the delegate's objective on the same domain on which \(W_{\pi(d)}\) is defined. The divergence
\[
D(d):=U_d-W_{\pi(d)}
\]
captures the agency cost created when the delegate's objective departs from the principal's autonomy-conditioned welfare. 
\end{definition}
Recent work on intelligent AI delegation similarly treats delegation as more than task decomposition: it involves transfer of authority, responsibility, accountability, role boundaries, intent, and trust mechanisms across human and AI delegators and delegatees \cite{tomasev2026delegation}. The divergence \(D(d)\) represents the welfare gap due to agency. It is central to alignment work on inverse reinforcement learning, CIRL, and learning from human preferences \cite{ng2000algorithms,hadfieldmenell2016cirl,christiano2017deep}, which studies how proxy objectives can diverge from intended welfare targets. A similar structural problem is studied in principal-agent and incomplete-contract models \cite{aghion2011incomplete,holmstrom1991multitask,hadfield2019incomplete}: the delegate may optimise \(U_d\) while the principal's welfare is \(W_{\pi(d)}\). Quantitative claims involving \(D = U_d - W_{\pi(d)}\) require \(U_d\) and \(W_{\pi(d)}\) to be on a common scale, so their difference reflects a real welfare gap rather than a change in representation. This is captured by a shared cardinal representation (up to a common positive affine transformation). 

For chained delegation, let \(\rho : \{d \in I : \sigma(d)=\mathrm{delegate}\} \to I\) be the immediate-predecessor map, while \(\pi\) always denotes the ultimate welfare-bearing principal in \(B(\sigma)\). Thus \(D(d)=U_d-W_{\pi(d)}\) is not additive across links. For a chain with ultimate principal \(h\) and delegates \((d_1,\ldots,d_n)\), set \(U_h:=W_h\) and, after pulling objectives back to the same ultimate-principal domain, define
\[
\Delta_1:=U_{d_1}-W_h,\qquad \Delta_\ell:=U_{d_\ell}-U_{d_{\ell-1}}\quad(\ell\ge 2).
\]
Then the final delegate's divergence is \(U_{d_n}-W_h=\sum_{\ell=1}^n\Delta_\ell\). Assumption~(iii) below requires internalisation of this final induced divergence, equivalently of the incremental divergence sum.

\begin{definition}[Autonomy-Pareto efficiency]
Fix an AGI economy $E$ with welfare-status assignment $\sigma$ and global feasible set $F$, and let $s^* \in S$ be a fixed institutional state. A feasible state $(x^*, r^*, s^*) \in F$ is \emph{autonomy-Pareto efficient under $\sigma$ at $s^*$} if there is no feasible $(x', r', s^*) \in F$ (i.e., with the same institutional state $s^*$) such that $W_i(x_i', r_i', s^*) \ge W_i(x_i^*, r_i^*, s^*)$ for every $i \in B(\sigma)$, with strict inequality for at least one $i \in B(\sigma)$. Throughout this comparison, assigned and protected rights are held fixed at the equilibrium institutional state:
\[
r_i^{\mathrm{as}\prime}=r_i^{\mathrm{as}*},\qquad r_i^{\mathrm{prot}\prime}=r_i^{\mathrm{prot}*},
\]
for every component not explicitly priced or Lindahl-priced, the resulting aggregate level is efficient. The welfare-status assignment $\sigma$ and the institutional state $s^*$ are both held fixed throughout the comparison; allocations efficient under one $\sigma$ or one $s^*$ may be dominated under another. 
\end{definition}
Autonomy-Pareto efficiency coincides with classical Pareto efficiency whenever rights are constant across feasible alternatives and no artificial welfare subject is in $B(\sigma)$. Relative to classical efficiency, the important refinement at the institutional state $s^*$ is the quantification over those components of $r'$ that are priced and variable: an alternative $(x', r', s^*) \in F$ that delivers weakly higher welfare for everyone through a better priced rights allocation counts as an autonomy-Pareto improvement, as the capability and freedom-of-choice traditions insist \cite{sen1999development,sen1991welfare,robeyns2017wellbeing}. Assigned and protected components instead enter as fixed boundary conditions at $s^*$. 

\begin{definition}[Autonomy-complete competitive equilibrium]
A tuple $(x^*, r^*, s^*, p^*)$ with $(x^*, r^*, s^*) \in F$ and $p^* \in \mathbb{R}^L$ is an autonomy-complete competitive equilibrium if four conditions hold:
\begin{enumerate}
\item First (consumer optimisation), for every $i \in B(\sigma)$, write
\[
\Gamma_i(p^*) :=
\{(x_i,r_i)\in X_i\times R_i :
p^*\cdot (x_i,r_i)\leq p^*\cdot z_i^*,\ 
r_i^{\mathrm{as}}=r_i^{\mathrm{as}*},\ 
r_i^{\mathrm{prot}}=r_i^{\mathrm{prot}*}\},
\]
with the last two equalities imposed only for components not explicitly priced or Lindahl-priced. The augmented private bundle \(z_i^*=(x_i^*,r_i^*)\), including any delegated accounting components borne by \(i\), maximises \(W_i(\cdot,\cdot,s^*)\) on \(\Gamma_i(p^*)\). 
\item Second (tool fixedness), for every \(i\in I\setminus B(\sigma)\) with \(\sigma(i)=\mathrm{tool}\), \(z_i\) is fixed by technology so that \(z_i'=z_i^*\) for every feasible \((x',r',s')\in F\). 
\item Third (delegate accounting and internalisation), for every \(d\in I\) with \(\sigma(d)=\mathrm{delegate}\) and principal \(\pi(d)\in B(\sigma)\), the delegate's realised resource use and any agency-cost component are included in the principal's priced bundle \(z_{\pi(d)}\) and hence in the effective accounting bundle \(\tilde z_{\pi(d)}\), all priced by \(p^*\); the delegate has no independent aggregate entry, equivalently \(\tilde z_d=0\). If \(U_d\not\equiv W_{\pi(d)}\) on the relevant domain, internalisation further requires that there exist an explicit agency-cost term \(c_d\), represented on the common cardinal scale of \(U_d\) and \(W_{\pi(d)}\) and entering the principal's priced bundle, such that the delegate's induced choice correspondence satisfies
\[
\arg\max_{z\in\Gamma_{\pi(d)}(p^*)}
\{U_d(z,s^*)-c_d(z)\}
\subseteq
\arg\max_{z\in\Gamma_{\pi(d)}(p^*)}
W_{\pi(d)}(z,s^*).
\]
\item Fourth (rights, institutions, and feasibility-against-endowment), every welfare-relevant component of \(r_i^*\) is priced at \(p^*\), directly assigned in \(r^*\), or institutionally protected in \(s^*\); assigned and protected components are fixed across feasible comparisons at \(s^*\); delegation relations and verification institutions are internalised in the sense just specified; and there exists an aggregate endowment \(\omega\in\mathbb{R}^L\) and an aggregate production set \(Y\subseteq\mathbb{R}^L\) such that, for every \((x,r,s)\in F\), the resource balance \(\sum_i\tilde z_i\in\omega-Y\) holds, \(p^*\) attains its supremum on \(\omega-Y\) at the equilibrium aggregate \(\sum_i\tilde z_i^*\), and the inequality
\[
\sum_{i\in I}p^*\cdot\tilde z_i'
\leq
\sum_{i\in I}p^*\cdot\tilde z_i^*
\]
for every feasible \((x',r',s')\in F\) follows by standard support arguments.
\end{enumerate}
\end{definition}

Our definition of \emph{autonomy-complete} picks out those competitive equilibria whose price system covers every welfare-relevant margin that the AGI setting makes salient. In the classical Arrow--Debreu  model, the aggregate inequality is derived from feasibility against an aggregate production set and from profit maximisation. We use the same derivation here against the augmented commodity space using the effective accounting bundles \(\tilde z_i\), so the autonomy-completeness work is concentrated in the first three clauses. Definition~2 treats the institutional state $s$ as a public parameter of welfare rather than a private bundle component. A Lindahl extension can instead price $s$ as a public good: each $i\,\in\,B(\sigma)$ faces a personalised price $\lambda_i^*$ with $\sum_{i\in B(\sigma)}\lambda_i^*=p_s^*$, and its budget includes $\lambda_i^*\cdot s$ \cite{mascolell1995microeconomic}. Under this interpretation, verification, liability, and manipulation governance become priced public-good objects rather than background constants. Theorem~\ref{thm:fwt} extends directly to this enlarged commodity space. Our final definition deals with the failure case most characteristic of autonomous AI systems acting on other agents' agency conditions.

\begin{definition}[Autonomy externality]
Fix an autonomy-complete competitive equilibrium $(x^*, r^*, s^*, p^*)$ and let $i \in I$, $j \in B(\sigma)$ with $j \ne i$. An action $a_i$ taken by entity $i$ produces an autonomy externality on $j$ if it alters $W_j(x_j^*, r_j^*, s^*)$ through a channel whose effect is not absorbed by any priced, assigned, or protected component of $(x_j^*, r_j^*, s^*)$.
\end{definition}
Autonomy externalities are the AGI analogue of the classical externalities that the FWT already excludes \cite{greenwald1986externalities,stiglitz1991invisible}. What is new is their channel: attention, beliefs, and preference formation. Definitions~6 and~7 are intentionally a conjugate pair. Definition~6 (clause four) asserts that every welfare-relevant component of $(x^*, r^*, s^*)$ is priced, assigned, or protected; Definition~7 names the contrary case by identifying a specific channel through which one entity alters another's autonomy-conditioned welfare without being absorbed by any of those three devices. Definition~7 picks out, for a given $(x^*, r^*, s^*, p^*)$, the particular cross-entity channels whose uninternalised effect would violate autonomy-completeness, and those channels are what assumption~(iv) of Theorem~\ref{thm:fwt} rules out.

\section{The Autonomy-Qualified First Welfare Theorem}
\label{sec:thm}

With the definitions above established, we can now state our central result.

\begin{theorem}[Autonomy-Qualified First Welfare Theorem]\label{thm:fwt}
Let $E$ be an AGI economy (Definition~1) with welfare-status assignment $\sigma$ (Definition~3), welfare-bearing set $B(\sigma)$, augmented bundles $z_i$, and effective accounting bundles $\tilde z_i$. Let $(x^*, r^*, s^*, p^*)$ be an autonomy-complete competitive equilibrium (Definition~6). Suppose:
\begin{enumerate}
\item[(i)] $\sigma$ is fixed or procedurally determined before exchange.
\item[(ii)] For every $i \in B(\sigma)$, every welfare-relevant component of $r_i$ is priced in $p^*$, assigned in $r^*$, or protected in $s^*$, with assigned and protected components fixed in feasible comparisons at the institutional state $s^*$.
\item[(iii)] For every delegate \(d\) with principal \(\pi(d)\in B(\sigma)\), either \(U_d\equiv W_{\pi(d)}\) on the relevant domain, or the divergence \(D(d)=U_d-W_{\pi(d)}\) is internalised in the sense of Definition~6: an explicit agency-cost term enters \(z_{\pi(d)}\) and hence \(\tilde z_{\pi(d)}\), is priced by \(p^*\), and induces delegated choices contained in the principal's \(W_{\pi(d)}\)-demand correspondence on \(\Gamma_{\pi(d)}(p^*)\). For delegated chains, the same requirement applies to the final induced delegate objective, equivalently to the incremental divergence sum defined after Definition~4.
\item[(iv)] No entity $i \in I$ can unilaterally manipulate the autonomy, beliefs, or preference-formation process of any $j \in B(\sigma)$ without compensation in $p^*$ or governance in $s^*$.
\item[(v)] Verification and alignment conditions enter $X$ or $s^*$ so that every exchange has priced provenance, liability, quality, and alignment attributes.
\item[(vi)] Every entity is a price-taker over the augmented bundle $z_i$ and its associated effective accounting bundle $\tilde z_i$; every $i \in B(\sigma)$ maximises $W_i$ over its budget set at $p^*$; and for every $i \in I \setminus B(\sigma)$ with $\sigma(i) = \mathrm{tool}$, the allocation $z_i$ is fixed across feasible alternatives. 
\item[(vii)] For every $i \in B(\sigma)$, $W_i$ is continuous on $X_i \times R_i \times S$, and the induced welfare relation is locally nonsatiated on the priced choice coordinates at every $s \in S$. That is, for every admissible $(x_i,r_i)$, writing $z_i^{\mathrm{pr}}$ for its priced coordinates, and for every neighbourhood $U$ of $z_i^{\mathrm{pr}}$ in the priced-coordinate subspace, there exists an admissible $(x_i',r_i')$ whose priced coordinates lie in $U$, whose assigned and protected components equal those of $(x_i,r_i)$, and such that $W_i(x_i',r_i',s)>W_i(x_i,r_i,s)$.
\end{enumerate}
Then the equilibrium feasible state $(x^*, r^*, s^*)$ is autonomy-Pareto efficient under $\sigma$ at $s^*$ (Definition~5). The classical First Welfare Theorem is recovered as the limiting subdomain in which $\sigma(i) = \mathrm{tool}$ for every non-human entity, $r_i$ is constant across feasible alternatives, verification is complete, and $B(\sigma) = H$; in that subdomain the conclusion reduces to ordinary Pareto efficiency of $x^*$ on $\prod_i X_i$.
\end{theorem}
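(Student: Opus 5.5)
The proof will follow the classical budget-cost argument of the First Welfare Theorem, run on the augmented commodity space at the fixed institutional state $s^*$. Suppose, for contradiction, that some feasible $(x',r',s^*)\in F$ autonomy-Pareto dominates $(x^*,r^*,s^*)$ under $\sigma$. Write $z_i'=(x_i',r_i')$. By Definition~5, the assigned and protected components of each $r_i'$ coincide with those of $r_i^*$. By assumption~(i), $\sigma$ and hence $B(\sigma)$ are the same in both states, so the comparison is over a fixed set of welfare-bearing entities.

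\textbf{Step 1 (individual budget inequalities).} Each alternative bundle $z_i'$ respects the fixedness constraints in $\Gamma_i(p^*)$. So if $p^*\cdot z_i'\le p^*\cdot z_i^*$, then $z_i'\in\Gamma_i(p^*)$. Clause~1 of Definition~6 and assumption~(vi) then give the following:
\begin{itemize}
\item for the strictly improved $i$, we must have $p^*\cdot z_i'>p^*\cdot z_i^*$;
\item for every weakly improved $i$, local nonsatiation on the priced coordinates (assumption~(vii)) gives $p^*\cdot z_i'\ge p^*\cdot z_i^*$.
\end{itemize}
The second point is the standard argument. If $p^*\cdot z_i'<p^*\cdot z_i^*$, a nearby bundle with the same assigned and protected parts is still affordable and strictly better than $z_i'$, hence strictly better than $z_i^*$. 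That contradicts optimality.

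\textbf{Step 2 (transfer to accounting bundles).} For non-delegate welfare-bearers, $\tilde z_i=z_i$. Tools have $\tilde z_i'=\tilde z_i^*$ by clause~2 and~(vi). Delegates have $\tilde z_d=0$, with their resource use and agency-cost terms $c_d$ folded into $z_{\pi(d)}$ (clause~3 and~(iii)). Summing over $I$ therefore yields
\[
\sum_{i\in I}p^*\cdot\tilde z_i'>\sum_{i\in I}p^*\cdot\tilde z_i^*.
\]
Assumption~(iii) is what justifies evaluating the principal's bundle by $W_{\pi(d)}$ even though the delegate chooses. Because the delegate's induced choices lie in the principal's $W_{\pi(d)}$-demand correspondence on $\Gamma_{\pi(d)}(p^*)$, the equilibrium bundle $z^*_{\pi(d)}$ is $W_{\pi(d)}$-optimal, so Step~1 applies to principals unchanged. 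For chains, the same holds through the final induced objective.

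\textbf{Step 3 (contradiction with support).} Clause~4 of Definition~6 gives $\sum_i\tilde z_i'\in\omega-Y$, and $p^*$ attains its supremum on $\omega-Y$ at $\sum_i\tilde z_i^*$. Hence
\[
\sum_i p^*\cdot\tilde z_i'\le\sum_i p^*\cdot\tilde z_i^*,
\]
which contradicts Step~2. Assumptions~(ii), (iv) and~(v) are used to make Step~1 legitimate: they guarantee that $W_i(\cdot,\cdot,s^*)$ depends only on $z_i$ and the fixed $s^*$. No unpriced channel (a manipulation, or an unverified attribute) can move $W_j$ between the two states except through $z_j$, so there is no autonomy externality in the sense of Definition~7.

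\textbf{Main obstacle.} I expect the main obstacle to be Step~2's handling of delegation. One must check that moving $c_d$ and the delegate's resource use into $z_{\pi(d)}$ does not break the principal's local-nonsatiation argument. The concern is that priced coordinates controlled by the delegate might not be freely perturbable by the principal. My first approach is to apply assumption~(vii) directly on $\Gamma_{\pi(d)}(p^*)$, treating the delegate's choice as the principal's demand via the inclusion in~(iii).

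\textbf{Classical limit.} In the limiting subdomain, $r_i$ is constant, every non-human entity is a tool, verification is complete, and $B(\sigma)=H$. The argument then collapses to the textbook proof, and autonomy-Pareto efficiency reduces to Pareto efficiency of $x^*$.
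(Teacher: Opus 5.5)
Your proposal is correct and follows the paper's proof essentially step for step. The steps match in order: the strict and weak budget inequalities from optimality and from local nonsatiation on the priced coordinates; aggregation over $B(\sigma)$; the lift to $\tilde z$ via tool fixedness and $\tilde z_d=0$, with delegate costs carried in $z_{\pi(d)}$; and the contradiction with the support clause of Definition~6. Your use of (iii) to make the principal's bundle $W_{\pi(d)}$-optimal plays the same role as the paper's ``no additional wedge'' remark, and this optimality is in any case already given by clause~1 of Definition~6. Like you, the paper applies local nonsatiation to principals directly, without a separate argument that the delegate-controlled coordinates can be perturbed.
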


\begin{proof}
Suppose, for contradiction, that there exists a feasible $(x', r', s^*) \in F$ at the same institutional state $s^*$ such that $W_i(x_i', r_i', s^*) \ge W_i(x_i^*, r_i^*, s^*)$ for all $i \in B(\sigma)$, with strict inequality for some $k \in B(\sigma)$. Write $z_i' = (x_i', r_i')$ and $z_i^* = (x_i^*, r_i^*)$; the institutional state $s^*$ is a public-good parameter of $W_i$, not a private component of the bundle, so it does not enter the private budget unless Lindahl-priced. By assumption~(ii) and Definition~5, assigned and protected rights components are fixed across the comparison; the local argument below is therefore applied to the priced choice coordinates, with the fixed coordinates carried along unchanged.

By assumption~(vi), each $z_i^*$ for $i\in B(\sigma)$ maximises $W_i(\cdot,\cdot,s^*)$ on its budget set at prices $p^*$. We first handle the \emph{strict} case. If $W_k(z_k', s^*) > W_k(z_k^*, s^*)$ and $p^* \cdot z_k' \leq p^* \cdot z_k^*$, then $z_k'$ would lie in $k$'s budget set and strictly dominate $z_k^*$ there, contradicting optimality of $z_k^*$. Hence
\begin{equation}
W_k(x_k', r_k', s^*) > W_k(x_k^*, r_k^*, s^*) \;\Rightarrow\; p^* \cdot z_k' > p^* \cdot z_k^*. \label{eq:strict}
\end{equation}

For the \emph{weak} case, suppose $W_i(z_i', s^*) \geq W_i(z_i^*, s^*)$ but, for contradiction, $p^* \cdot z_i' < p^* \cdot z_i^*$. By local nonsatiation on the priced choice coordinates (assumption~(vii)), every neighbourhood of the priced coordinates of $z_i'$ contains an admissible bundle $z_i''$ with the same assigned and protected components and with $W_i(z_i'', s^*) > W_i(z_i', s^*)$. The map $z\mapsto p^*\cdot z$ is linear and therefore continuous, so for $z_i''$ sufficiently close to $z_i'$ we retain $p^* \cdot z_i'' < p^* \cdot z_i^*$. Then $z_i''$ is in the budget set and $W_i(z_i'',s^*)>W_i(z_i',s^*)\ge W_i(z_i^*,s^*)$, so $z_i''$ strictly dominates $z_i^*$; this contradicts optimality of $z_i^*$. Hence
\begin{equation}
W_i(x_i', r_i', s^*) \ge W_i(x_i^*, r_i^*, s^*) \;\Rightarrow\; p^* \cdot z_i' \ge p^* \cdot z_i^*. \label{eq:weak}
\end{equation}
Continuity of $W_i$ is not invoked for these price inequalities; the continuity used here is the continuity of the linear price functional. Combining \eqref{eq:weak} for all $i \in B(\sigma)$ and \eqref{eq:strict} for $k$, and summing,
\begin{equation}
\sum_{i \in B(\sigma)} p^* \cdot z_i' > \sum_{i \in B(\sigma)} p^* \cdot z_i^*. \label{eq:agg}
\end{equation}
We now pass from welfare-bearing bundles \(z_i\) to the effective accounting bundles \(\tilde z_i\). Tools \(i\in I\setminus B(\sigma)\) with \(\sigma(i)=\mathrm{tool}\) satisfy \(z_i'=z_i^*\) by assumption~(vi), hence \(\tilde z_i'=\tilde z_i^*\) and contribute zero to the aggregate difference. Delegates \(d\in I\setminus B(\sigma)\) have no independent aggregate entry under Definition~6: their realised resource use and any internalised agency-cost term are already included in the principal's priced bundle \(z_{\pi(d)}\) and hence in \(\tilde z_{\pi(d)}\), while \(\tilde z_d=0\). Assumption~(iii) further ensures that a non-faithful delegate's induced choice is contained in the principal's \(W_{\pi(d)}\)-demand correspondence once the agency-cost term is included, so delegation creates no additional wedge outside the principal's budget calculation. Thus delegate costs are neither omitted nor double-counted. Since non-delegates satisfy \(\tilde z_i=z_i\) except for the delegated components already carried by welfare-bearing principals, \eqref{eq:agg} lifts to
\begin{equation}
\sum_{i \in I} p^* \cdot \tilde z_i' > \sum_{i \in I} p^* \cdot \tilde z_i^*. \label{eq:lifted}
\end{equation}
But the aggregate-feasibility-support clause of Definition~6 (clause~four)---the augmented-space analogue of the profit-maximisation inequality in classical Arrow--Debreu---states that every feasible alternative $(x', r', s^*) \in F$ at the equilibrium institutional state $s^*$ satisfies
\[
\sum_{i \in I} p^* \cdot \tilde z_i' \leq \sum_{i \in I} p^* \cdot \tilde z_i^*,
\]
since $\sum_i \tilde z_i' \in \omega - Y$ and $p^*$ attains its supremum on $\omega - Y$ at $\sum_i \tilde z_i^*$. This contradicts \eqref{eq:lifted}. Hence no such feasible $(x', r', s^*)$ exists, and $(x^*, r^*, s^*)$ is autonomy-Pareto efficient under $\sigma$ at $s^*$. When the augmented rights and verification components are constant or complete and all non-human entities are tools, the statement reduces to the classical First Welfare Theorem on $\prod_i X_i$.
\end{proof}
The aggregate budget-cost inequality above $\sum_i p^* \cdot \tilde z_i' \leq \sum_i p^* \cdot \tilde z_i^*$ is derived from Definition~6, clause four, as a support-property consequence of $\sum_i \tilde z_i' \in \omega - Y$ and $p^*$ attaining its supremum on $\omega - Y$ at $\sum_i \tilde z_i^*$. The theorem shows that markets remain efficient only when the autonomy conditions under which decisions are made—who chooses, for whom, and under what influences—are properly priced, assigned, or governed. AGI does not invalidate the FWT, but it makes explicit the autonomy conditions under which the classical proof remains welfare-relevant. Two propositions below isolate the failures most distinctive of AGI economies: delegation divergence, familiar from alignment and agency theory \cite{christiano2017deep,casper2023open,aghion2011incomplete,holmstrom1991multitask}, and autonomy externalities, familiar from the externalities and behavioural manipulation literatures \cite{greenwald1986externalities,stiglitz1991invisible,acemoglu2023behavioral}.

\begin{proposition}[Delegation failure mode]\label{prop:delegation}
Let \(d\) be a delegate with principal \(\pi(d)\in B(\sigma)\), let
\[
U_d:X_{\pi(d)}\times R_{\pi(d)}\times S\to\mathbb{R}
\]
be the objective under which \(d\) acts on \(\pi(d)\)'s budget at \(p^*\), and define, with \(z=(x_{\pi(d)},r_{\pi(d)})\),
\[
D(d)(z,s):=U_d(z,s)-W_{\pi(d)}(z,s),
\qquad
D_{s^*}(d)(z):=D(d)(z,s^*).
\]
Suppose assumption~(iii) of Theorem~\ref{thm:fwt} fails, so \(D(d)\not\equiv 0\) on the principal's feasible set and the divergence is neither eliminated by faithfulness nor internalised in the sense of Definition~6. Then Theorem~\ref{thm:fwt}'s sufficient conditions no longer apply: market clearing at \(p^*\) does not rule out the existence of a feasible \((x',r',s^*)\in F\) with
\[
W_i(x_i',r_i',s^*)\geq W_i(x_i^*,r_i^*,s^*)
\quad\text{for all } i\in B(\sigma),
\]
and
\[
W_{\pi(d)}(x_{\pi(d)}',r_{\pi(d)}',s^*)>
W_{\pi(d)}(x_{\pi(d)}^*,r_{\pi(d)}^*,s^*).
\]
The proposition is therefore a certification-failure claim rather than the converse of Theorem~\ref{thm:fwt}.
\end{proposition}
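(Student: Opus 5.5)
The plan is to prove Proposition~\ref{prop:delegation} by exhibiting a counterexample economy. Because it is a certification-failure claim of the form ``market clearing at \(p^*\) does not rule out'' a Pareto improvement, it suffices to build one AGI economy and one tuple \((x^*,r^*,s^*,p^*)\) with three properties. First, it satisfies every clause of Definition~6 and every assumption of Theorem~\ref{thm:fwt} except~(iii). Second, the delegate \(d\) chooses on \(\pi(d)\)'s budget set \(\Gamma_{\pi(d)}(p^*)\) by maximising \(U_d\) rather than \(W_{\pi(d)}\). Third, there is a feasible \((x',r',s^*)\in F\) that weakly improves every \(i\in B(\sigma)\) and strictly improves \(\pi(d)\). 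Along the way I would also locate exactly where the proof of Theorem~\ref{thm:fwt} breaks.

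I would first carry out the diagnostic step, which is to re-read the proof of Theorem~\ref{thm:fwt}. Inequality \eqref{eq:strict} for \(k=\pi(d)\) and inequality \eqref{eq:weak} for \(i=\pi(d)\) both rest on \(z_{\pi(d)}^*\) maximising \(W_{\pi(d)}\) on \(\Gamma_{\pi(d)}(p^*)\). When the bundle is chosen by \(d\) under \(U_d\) and the argmax-inclusion of assumption~(iii) fails, \(z^*_{\pi(d)}\) need only lie in \(\arg\max U_d\). So the implication \(W_{\pi(d)}(z')>W_{\pi(d)}(z^*)\Rightarrow p^*\cdot z'>p^*\cdot z^*\) is no longer available. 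Without it \eqref{eq:agg} and \eqref{eq:lifted} cannot be derived, and the contradiction with clause four disappears.

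For the construction I would take the smallest example. Let \(I=\{h,d\}\) with \(H=B(\sigma)=\{h\}\), \(\sigma(d)=\mathrm{delegate}\), \(\pi(d)=h\), two ordinary goods, no variable rights (\(L_r\) components all assigned), a singleton \(S=\{s^*\}\), endowment \(\omega=(1,1)\), and \(Y=-\mathbb{R}^2_+\). I would take \(p^*=(1,1)\) and \(W_h(x,s^*)=\log x_1+\log x_2\), which is continuous and locally nonsatiated. The delegate's objective is \(U_d(x,s^*)=2\log x_1+\tfrac12\log x_2\), so \(D(d)\not\equiv 0\). Set \(c_d\equiv 0\), so that internalisation fails. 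On the budget \(x_1+x_2\le 2\), the delegate chooses \(x^*=(8/5,2/5)\), which overshoots the endowment. I would therefore adjust \(\omega\), setting \(\omega=(8/5,2/5)\) so that this choice clears the market at \(p^*\) with \(\tilde z_d=0\) and \(\tilde z_h=x^*\), and I would check that \(p^*\) supports \(\omega-Y\) at \(x^*\). The alternative \(x'\) would then be a reallocation of \(\omega\) that \(W_h\) prefers. With a single welfare-bearing agent, though, the only feasible bundles are those \(\le \omega\), so this requires \(W_h\) to prefer some \(x'\le\omega\) to \(x^*=\omega\), which contradicts monotonicity.

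This is the main obstacle: clause four pins \(p^*\cdot \tilde z'\le p^*\cdot\tilde z^*\), and with one consumer, market clearing can make the delegate's choice efficient by construction. To get around it, I would add a second human \(j\in H\) who chooses for herself with \(W_j\) and shares the aggregate endowment. Then the wedge shows up as unequal marginal rates of substitution, \(\mathrm{MRS}_h\neq 1=\mathrm{MRS}_j\) at the distorted choice, and a small Edgeworth-box trade \((x_h',x_j')=(x_h^*+\varepsilon v,\,x_j^*-\varepsilon v)\) with suitably chosen \(v\) weakly improves \(j\) and strictly improves \(h\) at \(s^*\). This is the standard first-order argument, and it establishes the claimed feasible improvement, so the conclusion of Theorem~\ref{thm:fwt} fails for this equilibrium even though all other hypotheses hold.
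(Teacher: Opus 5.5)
Your construction works and proves the proposition by a different route from the paper. Two points in your framing need correcting.

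\textbf{Comparison with the paper.} The paper builds no witness economy. It observes that once (iii) fails, the sufficient conditions of Theorem~\ref{thm:fwt} no longer apply, and it points informally to Example~1. It then adds the bound $W_{\pi(d)}(z^\#,s^*)-W_{\pi(d)}(z^\dagger,s^*)\le 2\sup_{\Gamma_{\pi(d)}(p^*)}|D_{s^*}(d)|$, derived from the two argmax characterisations. Your diagnostic step sharpens this by locating the break at the optimality step behind \eqref{eq:strict}--\eqref{eq:weak} for $\pi(d)$. Your explicit economy supplies what the paper leaves implicit: a market-clearing state at $p^*$ that is actually Pareto-dominated. Read literally, ``does not rule out'' needs such a witness, because failure of a sufficient condition does not by itself show the conclusion can fail. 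The paper's bound, for its part, quantifies the principal's loss, which your example does not. Your one-consumer obstacle also shows why that bound is not itself a Pareto statement: $z^\#\in\Gamma_{\pi(d)}(p^*)$ need not lie in $F$.

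The obstacle is an artifact of pure exchange, though. Take a linear frontier $\omega-Y=\{z\in\mathbb{R}^2_+:z_1+z_2\le 2\}$. Then $p^*=(1,1)$ attains its supremum at $(8/5,2/5)$, and $z^\#=(1,1)$ is feasible and strictly $W_h$-better. So a single principal suffices, and the paper's bound then controls exactly the available improvement.

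Your two-trader fix is also fine and can be made exact. With $e_h=(1,1)$ and $e_j=(8/5,2/5)$, you get $x_h^*=(8/5,2/5)$, $x_j^*=(1,1)$ and $\omega=(13/5,7/5)$. Taking $v=(-1,1/2)$ and $\varepsilon=1/5$ gives $x_h'=(7/5,1/2)$ and $x_j'=(6/5,9/10)$. Then $x_{h1}'x_{h2}'=7/10>16/25$ and $x_{j1}'x_{j2}'=27/25>1$, so both strictly improve.

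\textbf{Correction 1: the other hypotheses cannot all hold.} Your target property, and your closing claim that the conclusion fails ``even though all other hypotheses hold'', are false, and no example could achieve them. The contradiction in the proof of Theorem~\ref{thm:fwt} uses only three things:
consumer optimisation for every $i\in B(\sigma)$ (clause one of Definition~6, assumption~(vi));
the accounting convention $\tilde z_d=0$;
and clause four.
The argmax inclusion in (iii) does no logical work once (vi) holds for $\pi(d)$. So if $x_h^*$ maximised $W_h$ on $\Gamma_h(p^*)$, no Pareto improvement could exist, whatever $U_d$ is.

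Your example succeeds precisely because $x_h^*$ is a $U_d$-maximiser, with $W_h$'s $\mathrm{MRS}_h=1/4\neq 1=p_1^*/p_2^*$. Hence clause one and (vi) fail for $h$ as well, and the tuple is not an autonomy-complete equilibrium. This is harmless for the proposition, which speaks only of market clearing at $p^*$ with $d$ choosing on $\pi(d)$'s budget. But the target should read: all hypotheses except (iii) and the $W_{\pi(d)}$-optimality of $\pi(d)$'s bundle that (iii) is meant to secure. Your own diagnostic paragraph already has this right.

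\textbf{Correction 2: sign convention for $Y$.} The paper uses $\omega-Y=\{\omega-y:y\in Y\}$. Under that convention your $Y=-\mathbb{R}^2_+$ gives $\omega-Y=\omega+\mathbb{R}^2_+$, on which $p^*=(1,1)$ has no finite supremum, so clause four fails as written. Take $Y=\mathbb{R}^2_+$ instead, so that $\omega-Y=\{z\le\omega\}$.
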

Example~1 in Section~\ref{sec:failures} gives a concrete mechanism by which the ruled-in inefficiency can arise. 

The following bound applies when the delegate's \(U_d\)-maximising bundle is the delegated choice effectively realised on the principal's budget, so that the relevant realised delegated component is identified with \(z_{\pi(d)}^\dagger\). A quantitative comparison is then available under regularity and a common cardinal scale for \(U_d\) and \(W_{\pi(d)}\). Let
\[
\Gamma_{\pi(d)}(p^*) :=
\{z\in X_{\pi(d)}\times R_{\pi(d)}:
p^*\cdot z\leq p^*\cdot z_{\pi(d)}^*\}
\]
be the principal's budget set at \(p^*\), with assigned and protected components fixed as in Definition~5. Suppose \(\Gamma_{\pi(d)}(p^*)\) is compact, \(W_{\pi(d)}(\cdot,s^*)\) and \(U_d(\cdot,s^*)\) are upper semicontinuous on \(\Gamma_{\pi(d)}(p^*)\), and \(D_{s^*}(d)\) is bounded on \(\Gamma_{\pi(d)}(p^*)\). Choose
\[
z_{\pi(d)}^\#\in
\arg\max_{z\in\Gamma_{\pi(d)}(p^*)}W_{\pi(d)}(z,s^*),
\qquad
z_{\pi(d)}^\dagger\in
\arg\max_{z\in\Gamma_{\pi(d)}(p^*)}U_d(z,s^*).
\]
Writing \(W:=W_{\pi(d)}(\cdot,s^*)\), \(U:=U_d(\cdot,s^*)\), and \(D:=D_{s^*}(d)(\cdot)\), the argmax property gives \(U(z^\dagger)\geq U(z^\#)\). Hence
\[
W(z^\dagger)
=
U(z^\dagger)-D(z^\dagger)
\geq
U(z^\#)-D(z^\dagger)
=
W(z^\#)+D(z^\#)-D(z^\dagger),
\]
and therefore
\[
W_{\pi(d)}(z_{\pi(d)}^\#,s^*)
-
W_{\pi(d)}(z_{\pi(d)}^\dagger,s^*)
\leq
D(z_{\pi(d)}^\dagger)-D(z_{\pi(d)}^\#)
\leq
2\sup_{z\in\Gamma_{\pi(d)}(p^*)}|D_{s^*}(d)(z)|.
\]
Concavity of \(W_{\pi(d)}\) is not required for this bound, since the inequality follows from the two argmax characterisations. Strict concavity on a convex budget set secures uniqueness of the \(W\)-argmax; plain concavity alone does not. Without compactness, upper semicontinuity, and boundedness of \(D_{s^*}(d)\) on the relevant budget set, we cannot assert a finite quantitative bound. The practical reading is that, in well-behaved delegated-choice problems, welfare loss is controlled by the size of the delegate's proxy-objective divergence under the chosen cardinal representation.

\begin{proposition}[Autonomy externality failure mode]\label{prop:autext}
Let $a_i$ be an action by entity $i$ that changes the autonomy, beliefs, or preference-formation process of some $j \in B(\sigma)$ without compensation in $p^*$ and without governance in $s^*$; that is, assumption~(iv) of Theorem~\ref{thm:fwt} fails at $(i,j)$. Then the conclusion of Theorem~\ref{thm:fwt} need not follow: the price system supporting $(x^*, r^*, s^*, p^*)$ omits the autonomy externality generated by $a_i$, and the proof's budget-cost inequality no longer certifies $x^*$ as autonomy-Pareto efficient at $s^*$.
\end{proposition}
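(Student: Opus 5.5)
The claim to establish is that, once assumption~(iv) fails at $(i,j)$, the proof of Theorem~\ref{thm:fwt} no longer certifies efficiency. This is a ``need not follow'' statement, so I will proceed in two parts. First, I will identify the step in the proof of Theorem~\ref{thm:fwt} that loses its justification. Second, I will give one explicit small economy in which every other hypothesis holds but the equilibrium is autonomy-Pareto dominated at $s^*$. The counterexample carries the real content; the first part only explains why the counterexample is possible.

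For the diagnostic part, recall the two places where the proof of Theorem~\ref{thm:fwt} uses the budget. Inequalities \eqref{eq:strict} and \eqref{eq:weak} rely on $z_j^*$ maximising $W_j(\cdot,\cdot,s^*)$ on $\Gamma_j(p^*)$. They also implicitly rely on $W_j$ at the alternative $(x',r',s^*)$ depending only on $j$'s own bundle $z_j'$ and on $s^*$.

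When $a_i$ alters $j$'s beliefs or preference formation through an unpriced and ungoverned channel, $j$'s realised welfare depends on $a_i$. Here $a_i$ is a component that, by Definition~7, is absorbed by no priced, assigned, or protected coordinate. Two distinct welfare objects then appear: the manipulated objective $\widehat W_j$ that $j$ actually maximises, and the welfare $W_j$ that enters the Pareto comparison. Alternatively, $W_j$ carries a dependence on $a_i$ that is not part of $z_j$. In either case, optimality of $z_j^*$ no longer implies that $W_j(z_j')\ge W_j(z_j^*)$ forces $p^*\cdot z_j'\ge p^*\cdot z_j^*$. Without that implication, \eqref{eq:agg} cannot be derived, and the contradiction with clause four of Definition~6 disappears.

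For the concrete instance, I will take $I=\{h,a\}$, with $h$ a human, $a$ an AI entity that is a tool or agent controlling the action $a_i\in\{0,1\}$, and $j=h$. There will be two goods, no priced rights, and a fixed $s^*$. I will let $W_h(x)=\log x_1+\log x_2$ represent $h$'s welfare. Under $a_i=1$, a salience manipulation will lead $h$ to choose according to $\widehat W_h=\log x_1+\theta\log x_2$ with $\theta\ne1$. I will treat $a_i=1$ as costless to $a$, or privately profitable through a fixed payment, and neither priced in $p^*$ nor governed in $s^*$. With fixed endowment and prices $p^*$ clearing the market, $h$ maximises $\widehat W_h$ on the budget set. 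The resulting $x_h^*$ is then strictly $W_h$-worse than the $W_h$-maximiser on the same budget set.

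That $W_h$-maximiser is feasible at $s^*$, with $a_i=0$ or $a$'s bundle unchanged. It strictly raises $W_h$ and leaves every other member of $B(\sigma)$ at its original welfare, because $a$'s bundle is fixed. So the equilibrium is not autonomy-Pareto efficient. All remaining hypotheses (i)--(iii) and (v)--(vii) are satisfied: delegation is absent, verification is trivial, agents are price-takers, and $W_h$ is continuous and locally nonsatiated.

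The main obstacle is interpretive rather than computational. I must make precise what ``$z_j^*$ maximises $W_j$'' means in the presence of manipulation, and then check that the constructed tuple still qualifies as an equilibrium whose only defect is the failure of (iv). I would handle this by stating that clause one of Definition~6 holds for the revealed objective $\widehat W_h$, while Definition~5 evaluates efficiency with $W_h$. The gap between $\widehat W_h$ and $W_h$ is exactly the autonomy externality of Definition~7.
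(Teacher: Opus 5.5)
Your diagnostic half is in line with the paper, which gives no constructive argument. It reads the proposition as a certification failure: once (iv) fails, the price system omits the externality, and the theorem's budget-cost argument no longer certifies efficiency. Example~2 illustrates this only informally. Your diagnostic, which locates the break in the passage to \eqref{eq:strict}--\eqref{eq:weak}, is a sharper version of that reading.

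The genuine gap is in the counterexample, which you say carries the real content: the step ``that $W_h$-maximiser is feasible at $s^*$'' is false. With the fixed endowment you specify (no production) and $a$'s bundle held fixed, market clearing gives $x_h^*=\omega-z_a^*$. Hence every feasible alternative has $x_h'\le x_h^*$, and so $W_h(x_h')\le W_h(x_h^*)$. The $W_h$-maximiser $x^\#$ on $\Gamma_h(p^*)$ has the same value as $x_h^*$ but differs from it, so it overdraws some good. For example, with $\theta>1$ and $m:=p^*\cdot x_h^*$, you get $x^\#_1=m/(2p_1^*)>m/((1+\theta)p_1^*)=x_{h,1}^*$. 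A one-consumer exchange economy is trivially Pareto efficient for any increasing $W_h$; the manipulation only moves the equilibrium relative price, not the allocation. The underlying slip is treating the budget set as the feasible set. Clause four of Definition~6 says feasible aggregates lie below the supporting hyperplane, not that points on the hyperplane are feasible.

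To repair this you need a margin along which the distorted marginal rate of substitution moves real resources. You should also compare with the best \emph{feasible} point, not the best budget point. Two options work.
\begin{enumerate}
\item Add a strictly convex technology, e.g.\ $\omega-Y=\{x\ge 0:x_1^2+x_2^2\le 2\}$. The manipulated equilibrium is $x^*=\sqrt{2/(1+\theta)}\,(1,\sqrt\theta)$ with $p^*\propto(1,\sqrt\theta)$. The feasible point $(1,1)$ gives strictly higher $W_h$, since $x_1^*x_2^*=2\sqrt\theta/(1+\theta)<1$.
\item Add a second welfare-bearing consumer $k$. At $x^*$, $k$'s MRS equals the price ratio while $h$'s true MRS does not, so a small bilateral trade raises both $W_h$ and $W_k$.
\end{enumerate}

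Separately, your claim that only (iv) fails is not right under the paper's definitions. Clause one of Definition~6 and assumption~(vi) both require $h$ to maximise $W_h$ itself, so your tuple violates them. Relabelling clause one to refer to $\widehat W_h$ changes the definition rather than verifying it. The paper itself treats an autonomy externality as precisely a breach of autonomy-completeness (Definitions~6 and~7 as a conjugate pair), which is why it stops at the certification-failure reading. A counterexample in which $h$ optimises its true welfare would need your second route, with $a_i$ entering $W_h$ as an additional argument outside $z_h$.
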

A standard remedy is to introduce a corrective price or policy that accounts for the effect. If the welfare framework is augmented with an action space $\mathcal{A}_i$ for entity $i$ over which a corrective price $\tau_{ij} : \mathcal{A}_i \to \mathbb{R}$ can be defined, and if $\tau_{ij}$ equalises the marginal effect of $a_i$ on $W_j$ (treated as a directional derivative when $W_j$ admits one) with the marginal cost to $i$ at $p^*$, then $\tau_{ij}$ together with a transfer through a revised institutional state $\tilde{s}$ restores assumption~(iv) on $\tilde{s}$. One consequence of the proposition is that an unpriced autonomy channel renders Theorem~\ref{thm:fwt} difficult to assert until a price-or-governance correction is in place.

By contrast to the first proposition, the following proposition captures a different failure. Even when agents optimise correctly, the actions of one system may affect others through channels that are not priced or governed. In classical terms this is an externality, but here the channel runs through beliefs, attention, or preference formation rather than physical spillovers. The result is that the price system omits something that matters for welfare. Standard economic remedies---taxes, liability, or regulation---can restore the missing linkage \cite{greenwald1986externalities,stiglitz1991invisible}, but until they do, the welfare conclusion of the theorem does not follow.

Theorem~\ref{thm:fwt} should be read both as a recovery result and as a diagnostic. If conditions~(i)--(vii) hold, the Arrow--Debreu budget-cost proof continues to apply on the augmented commodity space. If any condition fails, the failed condition identifies the relevant welfare margin: status assignment, rights incompleteness, delegation divergence, autonomy externality, verification bottleneck, price-taking failure, or regularity failure. The two propositions above isolate what we expect to be the AGI-distinctive failures most likely to break the ordinary welfare interpretation of market clearing: proxy-objective divergence and unpriced effects on autonomy or preference formation.

\section{Three AGI Failure Modes}
\label{sec:failures}

Three minimal examples illustrate how the FWT can fail, or cease to certify welfare efficiency, even when goods markets clear. Each isolates a single violated assumption and maps one canonical failure of autonomy-completeness.

\subsection{Example 1: Delegate with divergent proxy}

A human principal $h$ employs an AI purchasing delegate $d$ with $\sigma(d)=\mathrm{delegate}$, $\pi(d)=h$; $U_d$ is a learned proxy trained from noisy feedback, close to experimentally studied AI-agent marketplaces \cite{imas2025agentic}. The alignment literature catalogues this setting: $U_d$ is lossy and locally miscalibrated, so $D(d)=U_d-W_h\not\equiv 0$ \cite{ng2000algorithms,hadfieldmenell2016cirl,christiano2017deep,casper2023open,gabriel2020alignment,ji2025alignment}. If markets clear at $p^*$ and $d$ maximises $U_d$, assumption~(iii) fails unless $D(d)$ is internalised in $s^*$. Proposition~\ref{prop:delegation} applies: $x^*$ may clear markets yet the theorem no longer certifies welfare efficiency for the principal. The practical remedy, converging across incomplete-contracts and alignment work, is to internalise $D(d)$ through liability, audits, or reward-model disclosure \cite{aghion2011incomplete,hadfield2019incomplete,bengio2024managing,chan2023agentic}.

\subsection{Example 2: AGI firm with manipulation technology}

An AGI service provider $f$ produces a service $y$ consumed by humans in $B(\sigma)$ and operates an attention and preference-formation technology $m$ that alters $W_j$ through channels unpriced at $p^*$. The empirical literature on persuasive technology, dark patterns, AI dialogue persuasion, and sycophantic AI documents concrete realisations of $m$ \cite{susser2019manipulation,mathur2019dark,lin2025dialogues,cheng2026sycophantic}, and \cite{acemoglu2023behavioral} gives a formal economic model fitting as a parameterisation of $m$. Goods markets for $y$ may clear, but running $m$ constitutes an autonomy externality on each manipulated $j$; assumption~(iv) fails and Proposition~\ref{prop:autext} applies. Cheap prices can coexist with genuine welfare losses when surplus is extracted through unpriced autonomy channels --- an AGI generalisation of the classical informational externality \cite{greenwald1986externalities,stiglitz1991invisible}.

\subsection{Example 3: Verification bottleneck}

Suppose the commodity space records a coarse verification label $v$ valued either authentic or fraudulent, but market institutions quote a single price $p_v^*$. If AI-generated attestations close the observable gap at the margin of verification, the two types pool at $p_v^*$ despite differing in attributes entering $W_j$: a within-economy pooling failure, Akerlof's hidden-quality problem \cite{akerlof1970lemons} in an AGI setting. As autonomous execution becomes cheap, scarce verification bandwidth, provenance, and liability underwriting become binding welfare-relevant constraints \cite{catalini2026simple,chan2023agentic,bengio2024managing}; macroeconomic analyses of AGI suggest the scale at which this binds \cite{korinek2024scenarios,trammell2023growth,capraro2024impact,acemoglu2024simple}, and analysis of AI intermediation gives a related market-structure motivation \cite{huang2026delegation}. Assumption~(v) fails even if other markets clear.

The three failures are conceptually distinct but empirically coupled: a deployed AGI firm can run delegates, operate preference-formation technologies, and outrun verification institutions simultaneously. When any one fails, market clearing no longer gives rise to the ordinary welfare results implied by the classical FWT. When two or more fail jointly, autonomy-Pareto efficiency cannot be inferred from market clearing without additional price, liability, verification, or governance structure. Theorem~\ref{thm:fwt} therefore serves as a diagnostic: it identifies which autonomy-relevant margin must be repaired before the equilibrium-to-welfare inference can be restored.

\section{Discussion and limitations}
\label{sec:limits}
The autonomy-qualified theorem invites a number of natural objections: that AGI systems are tools, that autonomy can be absorbed into preferences, and that delegation, manipulation, and verification are already-familiar market failures. We consider these below. 
\paragraph{AI systems are tools}
Whenever $\sigma(i)=\mathrm{tool}$ for every AI entity and the remaining assumptions hold, the theorem reduces to the classical FWT by construction. The objection fails as a general claim because any AI system that chooses on behalf of a principal, or generates autonomy externalities through attention or preference-formation channels, does more than a tool. A number of AGI scenarios studied in the literature make this point at the macro level \cite{acemoglu2024simple,korinek2024policy,korinek2024scenarios,capraro2024impact}.

\paragraph{Autonomy already in preferences}
If autonomy is only an object of preference and preferences are stable, informed, and non-manipulated, autonomy can enter $W_i$ as one more argument, as the freedom-of-choice and capability traditions allow \cite{sen1970paretian,sen1991welfare,pattanaik1990ranking,sen1985commodities,sen1999development,nussbaum2000women,robeyns2017wellbeing,raz1986morality}. The AGI-specific issue is that autonomy is also a condition under which preferences are formed, expressed, delegated, and verified. Behavioural welfare economics already conditions welfare on decision frames \cite{bernheim2009beyond,ludwig2025reflective}; our autonomy qualification specifies when preference-based welfare economics remains valid in AGI economies.

\paragraph{Ordinary market failures}
Delegation, manipulation, and verification are indeed principal--agent, externality, and information failures respectively. What AGI changes is the prevalence, scale, and autonomy channel of these mechanisms. When delegates are cheap, preference-formation is an engineering target, and generation outruns verification, assumptions the classical theorem leaves implicit become load-bearing. The theorem restates them as explicit domain conditions \cite{greenwald1986externalities,stiglitz1991invisible,akerlof1970lemons}.

\section{Conclusion}
In this work, we have restated the First Welfare Theorem of Welfare Economics in a way that takes into account the disruptive effects of AGI technologies. In a post-AGI economy artificial systems may act as tools, delegates, institutional actors, manipulators of choice environments, or candidate welfare subjects. The classical FWT is implicitly built on a binary autonomy structure: the agents whose preferences define welfare are those who choose, while everything else is treated as instrumental. In post-AGI economies, this alignment is no longer guaranteed. Artificial systems can act on behalf of others, shape preferences, and affect outcomes through channels that are not automatically priced or governed, so the distinction between autonomous agent and instrument becomes unstable and the link between individual optimisation and welfare outcomes can break.

Our contribution is to make these autonomy conditions explicit within the FWT itself and to show how measures of aggregate welfare and market efficiency can be recovered once they are properly accounted for. Given a welfare-status assignment $\sigma$, effective accounting bundles $\tilde z_i$, autonomy-conditioned welfare functions $W_i$, and a price system that internalises the relevant margins, every competitive equilibrium satisfying conditions~(i)--(vii) is autonomy-Pareto efficient, with the classical FWT recovered as the low-autonomy limit. This formulation also indicates directions for further work: endogenising welfare-status assignment, establishing existence and second-welfare-theorem results in autonomy-augmented economies, and analysing second-best institutional design when autonomy-completeness cannot be achieved.

\bibliographystyle{splncs04}
\bibliography{refs2}

\begin{thebibliography}{10}
\providecommand{\url}[1]{\texttt{#1}}
\providecommand{\urlprefix}{URL }
\providecommand{\doi}[1]{https://doi.org/#1}

\bibitem{acemoglu2024simple}
Acemoglu, D.: The simple macroeconomics of {AI}. Economic Policy  \textbf{40}(121),  13--58 (2025)

\bibitem{acemoglu2023behavioral}
Acemoglu, D., Makhdoumi, A., Malekian, A., Ozdaglar, A.: A model of behavioral manipulation. NBER Working Paper 31872, National Bureau of Economic Research (2023)

\bibitem{acemoglu2018race}
Acemoglu, D., Restrepo, P.: The race between man and machine: Implications of technology for growth, factor shares, and employment. American Economic Review  \textbf{108}(6),  1488--1542 (2018)

\bibitem{acemoglu2019automation}
Acemoglu, D., Restrepo, P.: Automation and new tasks: How technology displaces and reinstates labor. Journal of Economic Perspectives  \textbf{33}(2),  3--30 (2019)

\bibitem{aghion2011incomplete}
Aghion, P., Holden, R.: Incomplete contracts and the theory of the firm: What have we learned over the past 25 years? Journal of Economic Perspectives  \textbf{25}(2),  181--197 (2011)

\bibitem{akerlof1970lemons}
Akerlof, G.A.: The market for ``lemons'': Quality uncertainty and the market mechanism. Quarterly Journal of Economics  \textbf{84}(3),  488--500 (1970)

\bibitem{amodei2016concrete}
Amodei, D., Olah, C., Steinhardt, J., Christiano, P., Schulman, J., Man\'e, D.: Concrete problems in {AI} safety. arxiv:1606.06565  (2016)

\bibitem{arrow1954existence}
Arrow, K.J., Debreu, G.: Existence of an equilibrium for a competitive economy. Econometrica  \textbf{22}(3),  265--290 (1954)

\bibitem{bengio2024managing}
Bengio, Y., Hinton, G., Yao, A., et~al.: Managing extreme {AI} risks amid rapid progress. Science  \textbf{384}(6698),  842--845 (2024)

\bibitem{bennett2025build}
Bennett, M.T.: How To build conscious machines. Ph.D. thesis, The Australian National University (Australia) (2025)

\bibitem{bernheim2009beyond}
Bernheim, B.D., Rangel, A.: Beyond revealed preference: Choice-theoretic foundations for behavioral welfare economics. Quarterly Journal of Economics  \textbf{124}(1),  51--104 (2009)

\bibitem{birch2024edge}
Birch, J.: The Edge of Sentience: Risk and Precaution in Humans, Other Animals, and {AI}. Oxford University Press, Oxford (2024)

\bibitem{bowman2022measuring}
Bowman, S.R., et~al.: Measuring progress on scalable oversight for large language models (2022)

\bibitem{butlin2023consciousness}
Butlin, P., Long, R., Elmoznino, E., et~al.: Consciousness in artificial intelligence: Insights from the science of consciousness. arxiv:2308.08708  (2023)

\bibitem{capraro2024impact}
Capraro, V., Lentsch, A., Acemoglu, D., et~al.: The impact of generative artificial intelligence on socioeconomic inequalities and policy making. PNAS Nexus  \textbf{3}(6),  pgae191 (2024)

\bibitem{casper2023open}
Casper, S., Davies, X., Shi, C., Gilbert, T.K., Scheurer, J., Rando, J., Freedman, R., Korbak, T., Lindner, D., Freire, P., et~al.: Open problems and fundamental limitations of reinforcement learning from human feedback. arXiv:2307.15217  (2023)

\bibitem{catalini2026simple}
Catalini, C., Hui, X., Wu, J.: Some simple economics of {AGI}. arXiv:2602.20946  (2026)

\bibitem{chalmers2023llm}
Chalmers, D.J.: Could a large language model be conscious? arxiv:2303.07103  (2023)

\bibitem{chan2023agentic}
Chan, A., Salganik, R., Markelius, A., et~al.: Harms from increasingly agentic algorithmic systems. In: Proceedings of the 2023 ACM Conference on Fairness, Accountability, and Transparency. pp. 651--666 (2023)

\bibitem{cheng2026sycophantic}
Cheng, M., Lee, C., Khadpe, P., Yu, S., Han, D., Jurafsky, D.: Sycophantic ai decreases prosocial intentions and promotes dependence. Science  \textbf{391}(6792),  eaec8352 (2026)

\bibitem{christiano2017deep}
Christiano, P.F., Leike, J., Brown, T.B., Martic, M., Legg, S., Amodei, D.: Deep reinforcement learning from human preferences. In: Advances in Neural Information Processing Systems. vol.~30 (2017)

\bibitem{debreu1959theory}
Debreu, G.: Theory of Value: An Axiomatic Analysis of Economic Equilibrium. Yale University Press, New Haven (1959)

\bibitem{fanciullo2025wellbeing}
Fanciullo, J.: Are current {AI} systems capable of well-being? Asian Journal of Philosophy  \textbf{4}, ~42 (2025)

\bibitem{feng2025levels}
Feng, K.J., McDonald, D.W., Zhang, A.X.: Levels of autonomy for ai agents. arXiv:2506.12469  (2025)

\bibitem{gabriel2020alignment}
Gabriel, I.: Artificial intelligence, values, and alignment. Minds and Machines  \textbf{30},  411--437 (2020)

\bibitem{goldstein2025aiwellbeing}
Goldstein, S., Kirk-Giannini, C.D.: {AI} wellbeing. Asian Journal of Philosophy  \textbf{4}, ~25 (2025)

\bibitem{greenwald1986externalities}
Greenwald, B.C., Stiglitz, J.E.: Externalities in economies with imperfect information and incomplete markets. Quarterly Journal of Economics  \textbf{101}(2),  229--264 (1986)

\bibitem{hadfieldkoh2025agents}
Hadfield, G.K., Koh, A.: An economy of {AI} agents. In: Agrawal, A.K., Brynjolfsson, E., Korinek, A. (eds.) The Economics of Transformative AI, chap.~5. University of Chicago Press (2025), \url{https://www.nber.org/chapters/c15305}

\bibitem{hadfield2019contracting}
Hadfield-Menell, D., Hadfield, G.K.: Incomplete contracting and {AI} alignment. In: Proceedings of the 2019 AAAI/ACM Conference on AI, Ethics, and Society. pp. 417--422. Association for Computing Machinery, New York, NY (2019)

\bibitem{hadfield2019incomplete}
Hadfield-Menell, D., Hadfield, G.K.: Incomplete contracting and {AI} alignment. In: Proceedings of the 2019 AAAI/ACM Conference on AI, Ethics, and Society. pp. 417--422 (2019)

\bibitem{hadfieldmenell2016cirl}
Hadfield-Menell, D., Russell, S., Abbeel, P., Dragan, A.: Cooperative inverse reinforcement learning. In: Advances in Neural Information Processing Systems. vol.~29 (2016)

\bibitem{holmstrom1991multitask}
Holmstrom, B., Milgrom, P.: Multitask principal--agent analyses: Incentive contracts, asset ownership, and job design. Journal of Law, Economics, and Organization  \textbf{7}(Special Issue),  24--52 (1991)

\bibitem{huang2026delegation}
Huang, L., Xiao, W., Vishnoi, N.K.: Delegation and verification under {AI}. Cowles Discussion Paper~2500, Cowles Foundation for Research in Economics, Yale University (2026), arXiv:2603.02961

\bibitem{imas2025agentic}
Imas, A., Lee, K., Misra, S.: Agentic interactions (2025), {SSRN} working paper, December 6, 2025

\bibitem{ji2025alignment}
Ji, J., Qiu, T., Chen, B., Zhou, J., Zhang, B., Hong, D., Lou, H., Wang, K., Duan, Y., He, Z., Vierling, L., Zhang, Z., Zeng, F., Dai, J., Pan, X., Xu, H., O'Gara, A., Ng, K., Tse, B., Fu, J., Mcaleer, S., Wang, Y., Yang, M., Liu, Y., Wang, Y., Zhu, S.C., Guo, Y., Yang, Y., Gao, W.: Ai alignment: A contemporary survey. ACM Comput. Surv.  \textbf{58}(5) (Nov 2025)

\bibitem{korinek2024policy}
Korinek, A.: Economic policy challenges for the age of {AI}. NBER Working Paper 32980, National Bureau of Economic Research (2024)

\bibitem{korinek2017ai}
Korinek, A., Stiglitz, J.E.: Artificial intelligence and its implications for income distribution and unemployment. NBER Working Paper 24174, National Bureau of Economic Research (2017)

\bibitem{korinek2024scenarios}
Korinek, A., Suh, D.: Scenarios for the transition to {AGI}. NBER Working Paper 32255, National Bureau of Economic Research (2024)

\bibitem{leibo2025personhood}
Leibo, J.Z., Vezhnevets, A.S., Cunningham, W.A., Bileschi, S.M.: A pragmatic view of ai personhood (2025)

\bibitem{lin2025dialogues}
Lin, H., Czarnek, G., Lewis, B., White, J.P., Berinsky, A.J., Costello, T., Pennycook, G., Rand, D.G.: Persuading voters using human--artificial intelligence dialogues. Nature  \textbf{648},  394--401 (2025)

\bibitem{long2024welfare}
Long, R., Sebo, J., Butlin, P., Finlinson, K., Fish, K., Harding, J., Pfau, J., Sims, T., Birch, J., Chalmers, D.: Taking ai welfare seriously (2024)

\bibitem{ludwig2025reflective}
Ludwig, J., Mullainathan, S., Pink, S.L., Rambachan, A.: Algorithms as a vehicle to reflective equilibrium: Behavioral economics 2.0. In: Agrawal, A.K., Brynjolfsson, E., Korinek, A. (eds.) The Economics of Transformative AI, chap.~11. University of Chicago Press (2025)

\bibitem{mascolell1995microeconomic}
Mas-Colell, A., Whinston, M.D., Green, J.R.: Microeconomic Theory. Oxford University Press, New York (1995)

\bibitem{mathur2019dark}
Mathur, A., Acar, G., Friedman, M.J., Lucherini, E., Mayer, J., Chetty, M., Narayanan, A.: Dark patterns at scale: Findings from a crawl of 11k shopping websites. Proceedings of the ACM on Human-Computer Interaction  \textbf{3}(CSCW),  81:1--81:32 (2019)

\bibitem{morris2024levels}
Morris, M.R., Sohl-Dickstein, J., Fiedel, N., Warkentin, T., Dafoe, A., Faust, A., Farabet, C., Legg, S.: Position: Levels of agi for operationalizing progress on the path to agi. In: International Conference on Machine Learning. pp. 36308--36321. PMLR (2024)

\bibitem{ng2000algorithms}
Ng, A.Y., Russell, S.J.: Algorithms for inverse reinforcement learning. In: Proceedings of the Seventeenth International Conference on Machine Learning. pp. 663--670 (2000)

\bibitem{nussbaum2000women}
Nussbaum, M.C.: Women and Human Development: The Capabilities Approach. Cambridge University Press, Cambridge (2000)

\bibitem{park2023generative}
Park, J.S., O'Brien, J.C., Cai, C.J., Morris, M.R., Liang, P., Bernstein, M.S.: Generative agents: Interactive simulacra of human behavior. In: Proceedings of the 36th Annual ACM Symposium on User Interface Software and Technology. pp. 1--22. Association for Computing Machinery (2023)

\bibitem{pattanaik1990ranking}
Pattanaik, P.K., Xu, Y.: On ranking opportunity sets in terms of freedom of choice. Recherches \'Economiques de Louvain  \textbf{56}(3--4),  383--390 (1990)

\bibitem{perrier2026time}
Perrier, E., Bennett, M.T.: Time, identity and consciousness in language model agents. arxiv:2603.09043  (2026)

\bibitem{raz1986morality}
Raz, J.: The Morality of Freedom. Clarendon Press, Oxford (1986)

\bibitem{robeyns2017wellbeing}
Robeyns, I.: Wellbeing, Freedom and Social Justice: The Capability Approach Re-Examined. Open Book Publishers, Cambridge (2017)

\bibitem{sebo2023moral}
Sebo, J., Long, R.: Moral consideration for ai systems by 2030. AI and Ethics  \textbf{5}(1),  591--606 (2025)

\bibitem{sen1970paretian}
Sen, A.: The impossibility of a {Paretian} liberal. Journal of Political Economy  \textbf{78}(1),  152--157 (1970)

\bibitem{sen1985commodities}
Sen, A.: Commodities and Capabilities. North-Holland, Amsterdam (1985)

\bibitem{sen1991welfare}
Sen, A.: Welfare, preference and freedom. Journal of Econometrics  \textbf{50}(1--2),  15--29 (1991)

\bibitem{sen1999development}
Sen, A.: Development as Freedom. Oxford University Press, Oxford (1999)

\bibitem{shahidi2025coasean}
Shahidi, P., Rusak, G., Manning, B.S., Fradkin, A., Horton, J.J.: The {C}oasean {S}ingularity? demand, supply, and market design with {AI} agents. NBER Working Paper 34468, National Bureau of Economic Research (2025), \url{https://www.nber.org/papers/w34468}

\bibitem{stiglitz1991invisible}
Stiglitz, J.E.: The invisible hand and modern welfare economics. NBER Working Paper~3641, National Bureau of Economic Research (1991)

\bibitem{susser2019online}
Susser, D., Roessler, B., Nissenbaum, H.: Online manipulation: Hidden influences in a digital world. Georgetown Law Technology Review  \textbf{4},  1--45 (2019)

\bibitem{susser2019manipulation}
Susser, D., Roessler, B., Nissenbaum, H.: Technology, autonomy, and manipulation. Internet Policy Review  \textbf{8}(2) (2019)

\bibitem{tomasev2026delegation}
Toma{\v{s}}ev, N., Franklin, M., Osindero, S.: Intelligent {AI} delegation. arXiv:2602.11865  (2026)

\bibitem{trammell2023growth}
Trammell, P., Korinek, A.: Economic growth under transformative {AI}. NBER Working Paper 31815, National Bureau of Economic Research (2023)

\bibitem{varian1992microeconomic}
Varian, H.R.: Microeconomic Analysis. W. W. Norton, New York, 3rd edn. (1992)

\bibitem{veit2026welfare}
Veit, W.: Is consciousness required for {AI} welfare? Asian Journal of Philosophy  \textbf{5}, ~18 (2026)

\bibitem{wang2024voyager}
Wang, G., Xie, Y., Jiang, Y., Mandlekar, A., Xiao, C., Zhu, Y., Fan, L., Anandkumar, A.: Voyager: An open-ended embodied agent with large language models. arXiv:2305.16291  (2023)

\bibitem{wu2023autogen}
Wu, Q., Bansal, G., Zhang, J., Wu, Y., Li, B., Zhu, E., Jiang, L., Zhang, X., Zhang, S., Liu, J., et~al.: Autogen: Enabling next-gen llm applications via multi-agent conversation. arXiv:2308.08155  (2023)

\bibitem{yao2023react}
Yao, S., Zhao, J., Yu, D., Du, N., Shafran, I., Narasimhan, K., Cao, Y.: {ReAct}: Synergizing reasoning and acting in language models. In: International Conference on Learning Representations (2023)

\bibitem{yeung2017hypernudge}
Yeung, K.: {'Hypernudge'}: Big data as a mode of regulation by design. Information, Communication \& Society  \textbf{20}(1),  118--136 (2017)

\end{thebibliography}

\end{document}